\DeclareMathOperator*{\SNR}{SNR}
\DeclareMathOperator*{\argmax}{argmax}
\newtheorem{theorem}{Theorem}
\theoremstyle{definition}
\newtheorem{definition}{Definition}
\begin{document}
\title{Multihop Relaying in Millimeter Wave Networks: A Proportionally Fair Cooperative Network Formation Game}
\IEEEoverridecommandlockouts

\author{%
\IEEEauthorblockN{
Nof Abuzainab\IEEEauthorrefmark{1} and
Corinne Touati\IEEEauthorrefmark{1}\IEEEauthorrefmark{2}}\\
\thanks{This research was supported in part by the French National Research Agency project NETLEARN (ANR–13–INFR–004)}
\IEEEauthorblockA{\IEEEauthorrefmark{1}Inria \quad
\IEEEauthorrefmark{2} CNRS, LIG, Univ. Grenoble Alpes\\
Email: \{nof.abuzainab, corinne.touati\}@inria.fr}
}

\maketitle

\IEEEpeerreviewmaketitle

\begin{abstract}
Millimeter wave channels suffer from considerable degradation in the channel quality when the signal is Non Line of Sight (NLOS) between the source and the destination. Multihop relaying is thus anticipated to improve the communication between a source and its destination. This is achieved by transmitting the signal to a sequence of relays in which a Line of Sight (LOS) signal exists between two nodes along the path, or more generally when the signal is better than the transmitted signal directly from the source to the destination. In this paper, we consider a millimeter wave network composed of multiple source-destination pairs and a set of deployed relays. We formulate the problem of multihop relaying as a cooperative network formation game in which each relay chooses which source-destination pair to assist in order to improve the end-to-end performance, that is, the multihop delay between the source and the destination. Further, we present an algorithm based on the Nash Bargaining Solution to ensure fairness among the different source-destination pairs and assess its efficiency on numerical simulations.
\end{abstract}

\section{Introduction}

The interest in Millimeter Wave Communications has been tremendously increased as a viable technology for fifth generation wireless cellular systems. This is due to the fact that millimeter wave communications support the very high data rates necessary for broadband and multimedia communications thanks to the availability of large bandwidth at the high frequencies. However, communications at these high frequencies suffer from two main drawbacks. The first is that the millimeter wave signal suffers from severe pathloss. To overcome this, there is an active research going on designing beamforming techniques in order to extend the signal range and enable communication between the targeted transmitter and receiver \cite{mmbeam1}-\cite{mmbeam4}. The second drawback is that the millimeter wave signal gets severely attenuated in the case of Non Line of Sight (NLOS) \cite{RRE14},\cite{mmchallenge2}.
To improve communications in case of NLOS, the use of intermediate relays that have LOS (or in general a better) signal with the source, the destination or among each others is suggested. Hence constructing a path between the source and destination using those relays improves the source-destination communication. In this paper, we focus on the second challenge and attempt to design a multihop relaying technique for a millimeter wave networks.

Relaying in general is a well studied topic in wireless communications, and there is a vast literature covering multihop relaying. However, multihop relaying in the context of millimeter wave communications is yet a new topic, and there has been still few works that deal with this issue. The work in \cite{mmscheduling} selects for a given source-destination pair employing millimeter wave RF the best relay within the beamwidth of the source to assist in the transmission in case of Non Line of Sight (NLOS) between the source and the destination. Also based on this relay selection mechanism, a scheduling algorithm is presented for the case when multiple source-destination pairs are present. In \cite{mmmultihop} a centralized algorithm for multihop relaying routing that takes into account the characteristics of the millimeter wave transmissions is presented. In particular, the presence of multiple source-destination pairs, where each source is interested in video streaming to its destination is assumed. Further, the performance is measured in terms of a differentiated quality function of each flow. The algorithm then finds a feasible route of relays for each source-destination pair, and the objective is to maximize the sum of differentiated quality functions for all flows. %\corinne{Why is it not network formation game?}

Our work considers the multihop routing problem for multiple source-destination pairs employing millimeter wave RF as in \cite{mmmultihop}. However, our approach is different because we formulate the multihop relaying problem as a \emph{network formation game}. Network formation games have been recently used for multihop relaying in wireless networks (see \cite{NFtree, NFCR1}) but not yet in particular for millimeter wave networks. In \cite{NFtree}, an algorithm based on network formation game is presented that constructs a uplink multicast tree of relays, to which the mobiles can connect to in order to communicate with the base station. In this game, the relays are the players, and their objective is to connect to the tree in such a way that maximizes their utilities, where the utility is measured in terms of per hop delay and bit error rate. In \cite{NFCR1}, a network formation game is formulated for multihop relaying in Cognitive Radio Networks. The game has also a Stackelberg approach in which the primary source-destination pairs are considered as the leaders, and the secondary users are considered as the followers, and the objective is to construct a path of secondary users (that act as relays) between each source-destination pair so as to improve its transmission and to eventually give the secondary users chance of channel access. In both problems, the network formation games are non cooperative i.e. the player moves are based on maximizing their individual utilities.

In contrast, our approach is based on \emph{cooperative} network formation games and in particular a coalitional graph game in which the path between each source destination pair is constructed in a distributed fashion i.e. each relay decides on joining the path of a certain source-destination pair. Each group of relays along the same path forms a coalition. But as opposed to all previously mentioned works, our algorithm achieves \emph{proportional fairness} in order to maintain an acceptable quality for every source-destination pair and to ensure fairness in relay assignment among the different source destination pairs.

\section{System Model}

We consider a set $\mathcal{M}$  of source-destination pairs $\{(s_i,d_i)\}$, ($i=1,2,...,M$, $M=|\mathcal{M}|$) where each source $s_i$ has a file of $B_i$ bits to deliver to destination $d_i$, and a set $\mathcal{N}$ of deployed relays. It is assumed that the nodes employ millimeter wave RF.
%Based on these assumptions, t
Then, the received power $P_R$ is given by
$$
 P_R=AM_TM_Rd^{-\alpha}P_T,
$$
 where $M_T$ and $M_R$ are the antenna gains at the transmitting and receiving nodes respectively, $A$ and $\alpha$ are the pathloss coefficient and exponent, $P_T$ is the transmitted power, and $d$ is the distance between the transmitting and receiving nodes.

Millimeter wave signals get severely attenuated with distance. Hence, we assume that all nodes employ directional beamforming, and that each pair of communicating nodes engage in a beamstearing algorithm in order to achieve the maximum directivity gains. It is further assumed that the beamwidth is very small as transmitting at very high frequencies permits very narrow beamforming. (Some current products such as \cite{AMW} demonstrate that the beamwidth can be as small as 2 degrees and that interference can be eliminated even with nodes along the same path.) This makes it very unlikely for two pairs of nodes to interfere with each other and therefore interference is neglected. Additive white Gaussian noise with zero mean and power spectral density $N_0$ is assumed to be present at each node. Hence, the received Signal to Noise Ratio (SNR) is given by $\SNR = P_r / N_0$
% \begin{equation}
% \SNR=\frac{AM_TM_Rd^{-\alpha}P_T}{N_0}.
% \end{equation}
and we assume that the achieved rate $R$ is related to the SNR through Shanon's capacity formula i.e. $$
R=W\log(1+\SNR) = W\log\left(1+\frac{AM_TM_Rd^{-\alpha}P_T}{N_0}\right)
$$ where $W$ is the available bandwidth.

Further, millimeter wave signals can get severely attenuated with blockage, and thus the signal can get considerably degraded in the case of non line of sight (NLOS). Hence, the channel quality between any pair of nodes is dependent whether a line of sight (LOS) signal exists or not. In particular, measurements (such as in \cite{RRE14}) have shown that different pathloss models exist for the LOS and the NLOS cases. We define $A_N$ and $\alpha_N$ to be the pathloss coefficient and exponent respectively for the NLOS case and $A_L$ and $\alpha_L$ to be the pathloss coefficient and exponent for the LOS case.

Since the direct channel between each source-destination may be NLOS, the achieved rate using direct transmission may be low and incur significant delay to deliver file from the source and the destination. The objective is then to devise multihop relaying  i.e. to try to find a path between each source-destination pair using relays (as some relays might have LOS signal with the source and destination and among each other) so as to improve the communication between each source-destination. The performance is assessed by computing the multihop delay i.e. the time spent through the path to deliver the file from the source to its destination. In order to find a path for each source-destination pair, we design an algorithm using a cooperative network formation game or more specifically a coalition graph game in which each relay chooses to connect to one source-destination pair. Our algorithm also ensures proportional fairness among the source-destination pairs.

 \section{Coalition Graph Game Formulation}

 \subsection{Problem Formulation}

We formulate our problem as a cooperative network formation game or more specifically a coalition graph game, where the players are the relays. Each relay chooses to assist one source-destination pair by connecting itself along the path between the chosen source and destination in a way that achieves the best performance possible. Thus, a group of relays assisting the same source-destination pair is considered as a coalition.
%The performance is measured in terms of the multihop delay i.e. the time required to deliver the file between the source and destination by traversing the current path.
This leads us to the following definitions:

%\vspace{0.5 cm}

\begin{definition}[Path]
%\textbf{Definition (Path):}
A path $\mathcal{P}_i$ between source $s_i$ and destination $d_i$ is a sequence $\sigma_0$, $\sigma_1$,...,$\sigma_{N_i}$,$\sigma_{N_i+1}$, where $N_i$ is the number of relays in the path, $\sigma_0=s_i$ and $\sigma_{N_i+1}=d_i$ and $\sigma_1$,...,$\sigma_{N_i}$ are the relays along $\mathcal{P}_i$. In other words, it is the set of edges given by $\mathcal{P}_i=\{<\sigma_j, \sigma_j+1>,  0 \leq j \leq N_i \}$.
\end{definition}

%\vspace{0.5 cm}

\begin{definition}[Action Set]
%\textbf{Definition (Action Set):}
Each relay $r$ (which is presently either unused or assisting a source-destination pair $(s_i, d_i)$) can decide to perform action $a_k$ and assist source-destination $(s_k,d_k)$ by inserting itself between two consecutive nodes along the path of ($s_k,d_k$) in a way that achieves the minimum possible multihop delay. In other words, if $\mathcal{P}_k$ is the current path between $s_k$ and $d_k$, $\mathcal{P}_{k}(r,j)$ is the path formed by inserting relay $R$ between nodes $j$ and $j+1$ along the path: $$\mathcal{P}_{k}(R,j)\hspace{-2pt}=\hspace{-2pt}\left(\mathcal{P}_k\backslash \{<\sigma_j, \sigma_{j+1}>\} \right) \hspace{-1pt}\cup\hspace{-1pt} \{\hspace{-1pt}<\sigma_j, r>,<r, \sigma_{j+1}>\hspace{-1pt}\}.$$ Relay $r$ inserts itself between nodes $\sigma_{j^*}$ and $\sigma_{j^*+1}$ such that $\displaystyle j^*= \argmax_j D_k(\mathcal{P}_{k}(r,j))$ where $D_k(\mathcal{P}_{k}(r,j))$ is the multihop delay along path $\mathcal{P}_{k}(r,j)$. We denote by $\mathcal{P}^*_{k}(r)$ the resulting path. Also, we define action $a_0$ where the relay decides not to assist any pair. The action set for each relay is then $\mathcal{A}=\{a_k, 0 \leq k \leq M\}$.
\end{definition}

\subsection{Proportional Fairness Maximization}
We are interested in allocating the relays to the source-destination pairs in a fair way so as to avoid situations in which all relays would be allocated to a few source-destination pairs (that have better channels with the relays) enjoying very enhanced performance, while other source-destination pairs would not be adequately assisted by the relays and get poor performance.
One approach in cooperative game theory is the Nash Bargaining Solution (NBS)~\cite{NBS}. In the NBS, %the players cooperate in order to avoid the pareto inefficient non cooperative equilibria, and
the objective is to choose the strategies of the players that maximize the following objective function, commonly known as the Nash product:
%\begin{equation}
$ \displaystyle \max_{s \in \mathcal{S}}\prod_{i=1}^{N}(U_i(s)-a_i) $,
%\label{nashproduct}
%\end{equation}
where $N$ is the number of players and $U_i(s)$ the utility for player $i$ when all players take actions represented by vector $s$ and vector $a=(a_1,a_2,...,a_N)$ is known as the disagreement point, where each $a_i$ corresponds to the utility value of player $i$ when no agreement is reached.
%The objective function in (\ref{nashproduct}) is known as the Nash product.
More often, the value of the $d_i$s are assumed to be zero. Using this assumption and by taking the logarithm of the objective function we get our objective function:
$
%\begin{equation}
\displaystyle \max_{s \in \mathcal{S}}\sum_{i=1}^{N}\text{log}(U_i(s)) %\label{PFsum}
%\end{equation}
$
which corresponds to the proportional fairness~\cite{KMT}.

\begin{definition}[Coalition Value]
%\textbf{Definition (Coalition Value):}
The value (or utility) of the coalition of each source-destination pair ($s_i,d_i$) is expressed in terms of the multihop delay, i.e. the time required to deliver the file from $s_i$ to $d_i$, which we denote $D_i(\mathcal{P}_i)$, with $\mathcal{P}_i$ the path corresponding to the relays in the coalition $C_i$.  Hence, in order to minimize the proportional fair sum of $D_i$s, we define the value of coalition $C_i$ to be $V(C_i)=-\text{log}(D_i(\mathcal{P}_i)).$ Then, maximizing the sum of the values of the coalitions amounts to maximizing the proportional fairness of the utilities of the source-destination users with their utilities being proportional to their transfer rates $U_i(\mathcal{P}) = 1 / D_i(\mathcal{P}_i)$.
\end{definition}

We consider the multihop delay as being the sum of delays of all edges from $s_i$ to $d_i$. Hence, it is given by the following expression\begin{equation}
D_i(\mathcal{P}_i) = B_i\sum_{j=0}^{N_i}\frac{1}{R_{\sigma_j,\sigma_{j+1}}},
\label{mhdelay}\end{equation}
where $N_i$ is the number of relays along the path, $B_i$ is the size of the file to transfer, $R_{\sigma_j,\sigma_{j+1}}$ is the rate achieved between node $\sigma_j$ and node $\sigma_{j+1}$. Again, we assume that node $\sigma_0$ is the source $s_i$ and node $\sigma_{N_i+1}$ is the destination $d_i$. Any other node $\sigma_j$ ($1 \leq j \leq N_i$) is the $j^{th}$ relay along the path. Note that in the delay expression of Equation~\ref{mhdelay}, it is assumed that each relay decodes the whole file before transmitting it to the next relay along the path. The multihop delay can be improved in the case where the file is divided into packets and transmissions occur packet by packet. However, the expression is more complicated to handle and relies on assumptions on the packet based system (see, e.g. \cite{NFtree} and \cite{NFCR1}, in which packets arrive at each source at a certain rate, and the average delay at each hop is computed based on modeling the packet service system as $M/G/1$ queue). Hence, our expression constitutes a simple upper bound on the delay for packet based transmissions.

The following theorem shows how the optimal actions of the relays maximize the proportional fairness sum.

\begin{theorem}
%Assuming that %the game is of transferable utility, and that
%the value of each coalition is divided equally among the constituting relays,
If each relay $r$ chooses to connect to the source-destination pair $(s_i,d_i)$ that maximizes its marginal contribution (i.e. that has the maximum $\log(D_i(\mathcal{P}_i))-\log(D_i(\mathcal{P}_{i}(r)))$ if it is positive and to chooses not to assist any pair - i.e. to remain unused - otherwise), the corresponding equilibria are the maximizers of the proportional fair sum.
%where $\mathcal{P}_{iR}$ is the new path formed by adding relay $R$ to path $\mathcal{P}_i$ (according to section III-A).
\end{theorem}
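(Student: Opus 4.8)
The plan is to recognize the described best-response procedure as the improving dynamics of an exact potential game whose potential is precisely the proportional fair sum, and then to read the equilibrium characterization off that structure. First I would make the global objective explicit: by the Coalition Value definition the quantity to be maximized is
\[
\Phi = \sum_{i=1}^{M} V(C_i) = -\sum_{i=1}^{M}\log\bigl(D_i(\mathcal{P}_i)\bigr),
\]
which, since $U_i(\mathcal{P})=1/D_i(\mathcal{P}_i)$, equals $\sum_i \log\bigl(U_i(\mathcal{P})\bigr)$, i.e. the proportional fair sum. I would take $\Phi$ as the candidate potential function.

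The central step is to show that a relay's marginal contribution coincides with the change it induces in $\Phi$. Consider a relay $r$ presently in coalition $C_k$ that deviates to coalition $C_l$ by inserting itself at the position selected by the $\argmax$ rule of Definition 2. Only the two coalitions $k$ and $l$ are affected, every other $V(C_j)$ being untouched, so
\[
\Delta\Phi = \bigl[V(C_k\setminus\{r\}) - V(C_k)\bigr] + \bigl[V(C_l\cup\{r\}) - V(C_l)\bigr].
\]
Writing $u_r^{\mathrm{new}} = V(C_l\cup\{r\}) - V(C_l) = \log\bigl(D_l(\mathcal{P}_l)\bigr) - \log\bigl(D_l(\mathcal{P}_l(r))\bigr)$ for the marginal contribution the relay seeks to maximize, and $u_r^{\mathrm{old}} = V(C_k) - V(C_k\setminus\{r\})$ for its contribution before the move, this reads $\Delta\Phi = u_r^{\mathrm{new}} - u_r^{\mathrm{old}} = \Delta u_r$. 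Since $u_r^{\mathrm{old}}$ is fixed once the current configuration is given and does not depend on the destination $l$, maximizing the marginal contribution is exactly maximizing the increment $\Delta\Phi$; the clause ``if it is positive, else remain unused'' corresponds to taking a move only when it strictly increases $\Phi$. This establishes that the game is an exact potential game with potential $\Phi$.

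From this the conclusion follows by the standard potential-game argument. A configuration is a fixed point of the marginal-contribution dynamics exactly when no single relay can strictly increase $\Phi$, that is, at the maximizers of $\Phi$ under unilateral relay moves; conversely, a global maximizer of $\Phi$ admits no improving deviation and is therefore an equilibrium, so the maximizers of the proportional fair sum are equilibria of the process. The step I expect to require the most care is the bookkeeping behind the telescoping display: one must check that removing $r$ from $C_k$ and re-inserting it optimally into $C_l$ leaves the paths, and hence the values, of all remaining coalitions literally unchanged, so that $\Delta\Phi$ reduces to the two bracketed marginal terms with no cross effects. The more delicate point, which I would address explicitly, is that the potential argument by itself only yields that equilibria are \emph{local} optima of $\Phi$; upgrading ``equilibria are local maximizers'' to ``equilibria are the maximizers'' requires either an additional structural argument ruling out non-global local optima for this separable $\Phi$ (a sum of per-coalition terms, each depending only on that coalition's relays and their optimal ordering), or reading the statement as the assertion that a global maximizer is stable and that the improving dynamics therefore never moves away from it.
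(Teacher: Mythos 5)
Your proof is correct, and it reaches the paper's conclusion by a more self-contained route. The paper does not verify the potential property directly: it first embeds the problem in a transferable-utility game in which each relay in coalition $C_i$ receives an equal share $u_r(C_i)=V(C_i)/N_i$, then defines the \emph{repercussion utility} $r_r(C_i\cup r)=u_r(C_i\cup r)-\sum_{k\neq r}\bigl(u_k(C_i)-u_k(C_i\cup r)\bigr)$, computes algebraically that this collapses to the marginal contribution $V(C_i\cup r)-V(C_i)=\log(D_i(\mathcal{P}_i))-\log(D_i(\mathcal{P}_i(r)))$, and finally invokes a result from its reference [gibbs] stating that a coalition game with coalition values independent across coalitions (guaranteed here by the no-interference assumption) and repercussion utilities is an exact potential game whose potential is the sum of coalition values. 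Your argument obtains the same exact-potential structure by direct bookkeeping: a unilateral move by relay $r$ touches only the coalition it leaves and the coalition it joins, so $\Delta\Phi=u_r^{\mathrm{new}}-u_r^{\mathrm{old}}$, which is precisely the computation hidden inside the cited general theorem. What the paper's route buys is a connection to a reusable framework (repercussion utilities and Gibbs-sampling refinements used later for convergence); what your route buys is independence from the external citation and an explicit check of the cross-coalition separability that the paper only asserts. You are also right, and more careful than the paper, on the final step: the paper's own proof concludes only that Nash equilibria are the \emph{local} maximizers of the potential, even though the theorem statement says ``the maximizers''; your explicit flag that upgrading local to global optima needs a further argument (or a weaker reading of the statement, namely that global maximizers are among the equilibria) identifies a genuine looseness in the paper rather than a defect in your proof.
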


\begin{proof}
We consider the game of transferable utility in which the utility (welfare) of each relay in a coalition is proportional to the collective contribution of all relays in the coalition i.e. the coalition value. Hence, when relay $r$ is assisting source-destination $(s_i,d_i)$, the coalition value is divided equally among the $N_i$ relays in the coalition: $u_r(C_i)=\frac{V(C_i)}{N_i}$.
We also set the utility of any unused relay to be zero (as if they were belonging to a dummy path with null coalition value).

Now, we define the repercussion utility of relay $r$ in coalition $C_i$ as
$$\displaystyle r_r(C_i\cup r)=u_r(C_i \cup r)-\sum_{k=1, k\neq r}^{N_i} (u_k(C_i)- u_k(C_i \cup r)).$$

By substituting the values of the utilities $u_r(C_i\cup r)$,
$u_k(C_i \cup r)$, and $u_k(C_i)$ into $r_r(C_i \cup r)$, we get:
$$\begin{array}{ll}
r_r(C_i) &=u_r(C_i \cup r)-\sum_{k=1, k\neq r}^{N_i} (u_k(C_i)- u_k(C_i \cup r))\\
&=\frac{V(C_i \cup r)}{N_i}-\sum_{k=1, k\neq r}^{N_i}\bigg(\frac{V(C_i)}{N_i-1}-\frac{V(C_i \cup r)}{N_i}\bigg)\\
&=\frac{V(C_i \cup r)}{N_i}-(N_i-1)\left(\frac{V(C_i)}{N_i-1}+ \frac{V(C_i \cup r)}{N_i}\right)\\
&=V(C_i \cup r)-V(C_i)\\
&=\text{log}(D_i(\mathcal{P}_i))-\text{log}(D_i(\mathcal{P}_{i}(r))).
\end{array}
$$

Recall that we assume that the utility of each unused relay is zero. The importance of this assumption is to prevent the relay to join a path that it would harm. This happens in the case when all its repercussion utilities for all paths are negative.

Due to our assumption that the network is interference free, the value of each coalition is not dependent on the other coalitions. It has been proven in \cite{gibbs} that a coalition game that satisfies this property and where repercussion utilities are used is an exact potential game with the sum of the original utilities as the potential function. Hence, our game is a potential game where the potential function is the negative of the proportional fair sum of delays of all source-destination pairs.
The result in Theorem 1 follows since an exact potential game has the property that (at least) one pure Nash equilibrium exists and that the Nash equilibria are the local maximizers of the potential function.
\end{proof}

\subsection{Algorithm}

We present in Algorithm \ref{alg1} a distributed algorithm for our cooperative network formation game in which the relays select to connect to a particular source-destination $(s_i,d_i)$ based on proportional fairness. We assume that the relays have full knowledge of the network topology, and that they store the current value of the multihop delay as well as the current path of each source-destination pair. We assume the size of the broadcasted messages is small (i.e. do not require high bit rates) and thus omnidirectional transmission is used during this phase and that a round-robin algorithm is chosen to select each relay $r$ periodically.

Note that we introduce some randomness as we allow each relay to take some non-optimal decision. Indeed, each relay joins the path that yields the maximum repercussion utility with probability $1-\varepsilon$, where $\varepsilon$ is commonly known as the mutation probability \cite{mutation}. Otherwise, the relay will randomly join the path of any other source-destination pair.

\begin{algorithm}
\Repeat{convergence}{
%Choose $\varepsilon$
\ForEach{Relay $r$ in $\mathcal{N}$}{
 \ForEach{source-destination $(s_k,d_k)$ in $\mathcal{M}$} {
 $r$ computes the repercussion utility of connecting to $(s_k,d_k)$:
 $r_k(r)=\log(D_k)-\log(D_{k}(r))$\\
 Let $k = \argmax \log(D_k)-\log(D_{k}(r))$
 }
with probability $1-\varepsilon$:\\
$\{$ Relay $r$ connects to  $(s_k,d_k)$

 Update path $\mathcal{P}_k$

 Relay $r$ broadcasts the updated path $\mathcal{P}_k$ and the new multihop delay $D_{k}(r)$ to all other relays$\}$

 otherwise:

$\{$ $r$ connects randomly to $(s_j,d_j)$ ($j \neq k$)\\
   Update path $\mathcal{P}_j$\\
    Relay $r$ broadcasts the updated path $\mathcal{P}_j$ and the new multihop delay $D_{j}(r)$ to all other relays $\}$\\
 \If{$r$ was previously connected to different $(s_i,d_i)$}{
 Update path $\mathcal{P}_i$ by removing $r$\\% from the path
 $r$ broadcasts the updated path $\mathcal{P}_i$ and the new multihop delay for $(s_i,d_i)$
 }
 }
}
\caption{Multihop Relaying Algorithm}\label{alg1}
\vspace{-0.2em}
\end{algorithm}

\vspace{-1em}\subsection{Convergence}
Due to the mutation probability, the evolution of paths of all source-destination pairs forms a Markov chain which is irreducible and aperiodic. Hence, it has a unique stationary distribution. It is shown in \cite{mutation} that  as  $\varepsilon$ tends to zero,
the process converges to a unique limiting distribution. Also, since our game is a finite exact potential game, it admits one or several pure Nash equilibria that are the local maximizers of the Nash product. Hence, as $\varepsilon$ tends to zero, the algorithm converges to a deterministic Nash equilibrium. In order to reach the global maximum of the Nash product, it is useful to incorporate Gibbs Sampling techniques (such as the algorithm in \cite{gibbs}). The drawback of Gibbs Sampling is that the convergence time might be unacceptably large for some scenarios.
%\corinne{I think that the mutation probability also ``helps'' in escaping local optima right? But not as efficiently as Gibbs does, is that it?}

\section{Numerical Simulations}
In order to simulate  our multihop relaying algorithm, we consider $M=3$ source-destination pairs and $N=10$ deployed relays. The coordinates of all nodes are generated randomly according to a uniform distribution on a $1000\times1000$ meters rectangular grid.
We set all direct channels between source-destination pairs to be NLOS. All other links are chosen LOS or NLOS ran\-domly. The probability of LOS, $p_L$ is taken according to the statistical blockage model of \cite{mmcoverage}: $p_L=e^{-\beta d}$, where $d$ is the distance between the nodes and $\beta$ is a parameter known as the average LOS range of the network and is related to the density and average blockage sizes and set to $\frac{1}{\beta}=141.4$ meters.
As for the pathloss models for both cases of LOS and NLOS, we use the the values obtained from the measurements in \cite{mmmeasurements}, i.e. $\alpha_N=3.88$ and  $\alpha_L=2.20$ for the pathloss exponents, $A_N=A_L=1$ for the pathloss coefficients and $M_T=M_R=4$ for all antennas gains. The transmission power for all nodes is set to be $P_T=1$ Watts and the AWGN variance is set to $N_0 = -40.87$ dBm. The bandwidth is set to $W=1$ GHz and the size of all files is $B_1=B_2=B_3=1$ Gb. For the algorithm, we choose the mutation probability $\varepsilon=10^{-4}$.

\begin{figure}[!b]
\vspace{-2em}
\centering
\includegraphics[width=8.6 cm,height=6cm,angle=0]{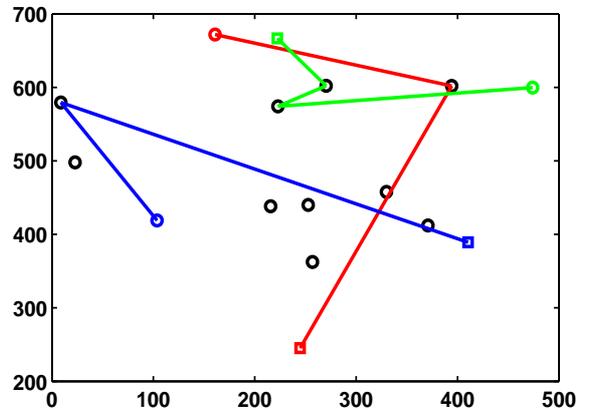}
\caption{Paths formed by Algorithm~\ref{alg1}
\label{multihop1}}
\vspace{-2em}
\end{figure}
\begin{figure}[tb]
\vspace{-1.8em}
\centering
\includegraphics[width=8.6 cm,height=6cm,angle=0]{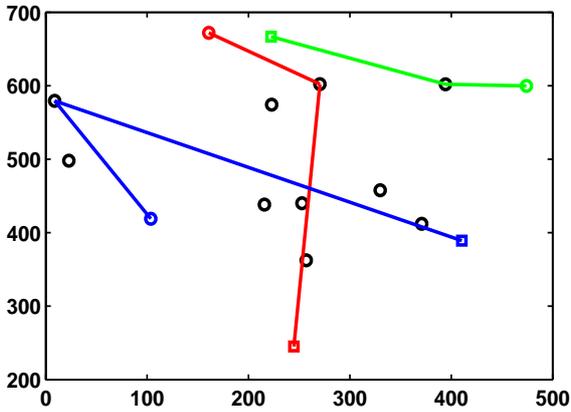}
\caption{Paths formed by the minimum delay algorithm.
\label{multihop2}}
\vspace{-1.7em}
\end{figure}

In order to investigate the potential benefits of proportional fairness, we compare its results to a modified version of the algorithm in which each relay joins the path of the source-destination pair that has the minimum delay. This modified version can be interpreted as a greedy approach whose convergence points are the Nash equilibria of the system.
Figures \ref{multihop1} and \ref{multihop2} show the paths formed between each source destination pair by using our multihop relaying algorithm and the modified minimum delay algorithm respectively. The red, green, and blue circles represent source nodes 1,2, and 3 respectively. The red, green, and blue squares represent destination nodes 1,2, and 3 respectively, and the black circles represent the relay nodes. In this run, most of the formed edges are LOS. Table \ref{delay1} shows the delay values computed for the cases of direct transmission, the proportional fair multihop relaying algorithm, and the modified minimum delay algorithm respectively. First, it is easy to see that the delay values have significantly decreased when multihop relaying (for both cases of proportional fairness and minimum delay) is employed, which confirms the benefits of the multihop relaying algorithm in improving the transmission of the different source destination pairs. Also by comparing the delay values obtained from the two algorithms, we find that when using the minimum delay algorithm, the delay of $(s_2,d_2)$ has slightly dropped from 0.2505 sec (for the proportional fairness algorithm case) to 0.1058 sec while the delay of $(s_1,d_1)$ has increased from 0.1428 sec (for the proportional fairness case) to 1.2451 sec. 
Further, we compute the variances of the delays of all paths for both the proportional fairness algorithm and the minimum delay algorithm. We find out the the variance is 0.282 for the minimum delay algorithm while it is 0.0028 for the proportional fairness algorithm. This shows that proportional fairness can provide a better distribution of the relays among the source destination pairs.

\begin{table}[htb]
%\vspace{-1em}
\centering
\begin{tabular}{ |l|c|c|c| }
  \hline
  &Direct & Prop Fairness & Minimum Delay \\ \hline
$(s_1,d_1)$  &88.77 & 0.1428 & 1.2451\\
  \hline
  $(s_2,d_2)$  &12.6 & 0.2505 & 0.1058 \\
  \hline
  $(s_3,d_3)$  &23.7 & 0.1318 &  0.1318 \\ \hline\end{tabular}
  \caption{Delay Values (in seconds)}
\label{delay1}
\vspace{-1em}
\end{table}

Further, we run both both algorithms for a thousand times. In each simulation, we randomly generate the coordinates of the nodes and choose which links are LOS. %(using the probability $p_L=e^{-\beta d}$).
After each run, we record the sum of delays of all paths for both algorithms. Then, we compute the average sum of delays for each algorithm out of the 1000 runs. We find that the average is 6.8 sec for our algorithm while it is 9.2 sec for the minimum delay algorithm and 86 sec when no multihoping is used (i.e. through the direct path). These values demonstrate the power of proportional fairness compared to the minimum delay approach: while the minimum delay approach can slightly benefit to some users, it does so at the cost of a decreased global performance. Both techniques exhibit excellent global performance compared to the direct transmission. Further, although it converges to a local optima of the Nash product, we observe very good performance metrics. Hence, while Gibbs sampling techniques could ensure convergence to global optima, it is anticipated that the cost of the convergence time will not be compensated by significant performance improvement.

\section{Conclusion}
We have considered a cooperative network formation algorithm in order to construct a multihop path through relays to improve the transmission of source-destination pairs employing millimeter wave RF. Also, we have considered proportional fairness in our algorithm. Due to the assumption of negligible interference in millimeter wave networks, we could show that our network formation game can be turned into a potential game, whose Nash equilibria maximize the proportional fair sum of the transmission rates. We further proposed a distributed algorithm and assessed its performance by numerical simulations. The results show the considerable performance improvement brought by multihop relaying especially in the case where the sources and destinations are NLOS. They also confirm the benefits of proportional fairness in achieving balanced allocations among the source-destination pairs while maintaining a good overall performance of the system.

%\corinne{If you feel inspired, feel free to add a sentence of how millimeter wave make the asumption of no-interference so valid here and how great it is that potential games meet with coalition games and such.}
\enlargethispage{2\baselineskip}

%\vspace{-1em}

% that's all folks

%\balance
\end{document}